\newcommand{\N}{{\mathbb N}}
\newcommand{\F}{{\bf F}}
\newcommand{\I}{{\bf I}}
\newcommand{\PCF}{\mathsf{PCF}}
\newcommand{\LCF}{\mathsf{LCF}}
\begin{document}
\title{The sequential functionals of type $(\iota \rightarrow \iota)^n \rightarrow \iota$ form a \emph{dcpo} for all $n \in \N $}
\author{Dag Normann} \address{Department of Mathematics, The University of Oslo, P.O. Box 1053, Blindern N-0316 Oslo, Norway}
\email {dnormann@math.uio.no}
\keywords{sequential procedure, sequential functional, directed complete partial ordering}
\subjclass{[{\bf Theory of Computation}];Models of computation-Computability-Lambda calculus}
\begin{abstract}\noindent We prove that the sequential functionals of some fixed types at type level 2, taking finite sequences of unary functions as arguments,  form a directed complete partial ordering. This gives a full characterisation of the types for which  the partially ordered set of sequential functionals has this property.

As a tool, we prove a normal form theorem for the finite sequential functionals of the types in question.
\end{abstract}
\maketitle
\section{Introduction}
When Scott \cite{Scott} introduced his calculus $\LCF$ and the model of partial continuous functionals for it, it was clear that this model is not fully abstract. There will be finite elements in this model, e.g. parallel or, that are not $\LCF$-definable, and consequently there will be observationally equivalent terms in $\LCF$ that will be interpreted as different objects in the Scott model.

Milner \cite{milner} constructed an alternative model for $\LCF$ based on Scott domains, a model that is fully abstract, and proved that up to isomorphism, there is only one such model. Since Milner's model has the cardinality of the continuum, there will be a lot of elements in this model that are not $\LCF$-definable. However, if we let $\LCF^\Omega$ denote the calculus where we extend $\LCF$ with one constant  $\hat f$ for each $f:\N \rightarrow \N$, together with the rewrite rules $\hat fa \rightarrow b$ whenever $f(a) = b$, the question  whether all objects in Milner's model are $\LCF^\Omega$-definable remained open for many years.

We let the \emph{sequential functionals} be the hereditarily minimal extensional model for $\LCF^\Omega$. In this paper, we will investigate the closure properties of the sequential functionals of types $(\iota \rightarrow \iota)^n \rightarrow \iota$, where $\iota$ denotes the base type interpreted as the flat domain $\N_\bot$. We will use one of the known characterisations of the sequential functionals, representing them  as the hereditarily extensional   collapse of the sequential procedures (see Section 2). The main result is that the sequential functionals of the types $(\iota \rightarrow \iota)^n \rightarrow \iota$ will form a directed complete partial ordering. As a tool, an extended \emph{normal form theorem} will be obtained for these types.

The first systematic  analysis of the $\LCF$-definable functionals, seen as a subclass of the Scott model,  was due to Sazonov \cite{Sa}. To this end, he introduced the concept of a \emph{strategy}, a concept that is close to sequential procedures as we define them. Later characterisations are due to Nickau \cite{Nickau}, to Abramsky, Jagadeesan and Malacaria \cite{AJM} and to Hyland and Ong \cite{HO}. These characterisations appeared during the 1990's. The question whether they characterised Milner's model as well was, however, left open through that decade. This question  was answered in Normann \cite{DN1}.  There it is proved that the sequential functionals of pure type 3 do not form a \emph{dcpo}, a directed complete partial ordering, and thus do not coincide with Milner's interpretation of this type. Normann and Sazonov \cite{N-S} studied some anomalies inspired by \cite{DN1}, and among other things, they proved that even the sequential functionals of type $(\iota,\iota \rightarrow \iota) \rightarrow \iota$ do not form a \emph{dcpo}. This type is of level 2 in the crude stratification of the finite types into levels. However, in \cite{N-S} the authors were unable to offer a full characterisation of when the sequential functionals of a given type form a \emph{dcpo}, and it is this gap of knowledge that will be closed in this note.

\smallskip
The sequential functionals $F$ of type $(\iota,\iota \rightarrow \iota)\rightarrow \iota$ will take sequential functions $f:\N_\bot \times \N_\bot \rightarrow \N_\bot$ as arguments and give elements of $\N_\bot$ as values. It is essential for the construction in \cite{N-S} that $f$ ranges over the \emph{binary} functions. We do not need the definition of sequential functionals of this type in this note. 

\section{Sequential procedures and functionals}
Throughout this note, we will let $n > 0$ be a fixed  integer.
We let the set of monotone functions $f:\N_\bot \rightarrow \N_\bot$ be our interpretation of the type $\iota \rightarrow \iota$, and we let $\sqsubseteq$ be the standard pointwise ordering on this set. We will use $f$, $g$ etc. for such functions. We will use $\vec f$ for $n$-sequences $(f_1 , \ldots , f_n)$ of functions.

\smallskip
We will interpret $\sigma = (\iota \rightarrow \iota)^n \rightarrow \iota$ as the set of \emph{sequential functionals} of the type.  They can be given as the denotational, hereditarily extensional,  interpretations of the \emph{ sequential procedures} as defined below. The sequential procedures will be infinite syntax trees, infinitary terms of type $\iota$, where we essentially follow the notation from \cite{L-N}. All of our procedures will be terms with free variables among $x_1, \ldots , x_n$, a list of variables of type $(\iota \rightarrow \iota)$.
\begin{rem} Since the interpretation $\N_\bot$ of $\iota$ is a sequential retract of the interpretation $\N_\bot \rightarrow \N_\bot$ of $(\iota \rightarrow \iota)$, we will indirectly also have covered the cases $\sigma_1 , \ldots , \sigma_n \rightarrow \iota$ where each $\sigma_i$ is either $\iota$ or $(\iota \rightarrow \iota)$.
When the meaning is clear, we will drop the brackets, and write $\iota \rightarrow \iota$.\end{rem}
\begin{defi}\label{2.1} For $a \in \N_\bot$, we let $a$ also be a constant term denoting itself.
The \emph{sequential procedures} in the variables $x_1 , \ldots , x_n$ of type $\iota \rightarrow \iota$ will be the largest class of infinitary terms $P$ of type $\iota$ such that either  $P = c$ for some $c \in \N_\bot$ or $P$ is of the form
\begin{center}$P$ =  case $x_i(Q)$ of $\{a \Rightarrow P_a\}_{a \in \N}$\end{center} where  $Q$ and each $P_a$ are sequential procedures. 
We will drop the index $a \in \N$ in the notation in this paper. \end{defi}
\begin{rem}This definition is a special, and slightly simplified, case of the more general definition of the sequential procedures of type $\sigma$ for any type $\sigma$, as given in \cite{L-N}. We have actually defined the relevant class of \emph{expressions} as defined in \cite{L-N}, but since we do not need  terms defined by $\lambda$-abstractions here, we name them \emph{procedures}.

The sequential  procedures were used as the basis for investigating the sequential functionals both in \cite{N-S} and in \cite{L-N}. In both, there are proofs of the fact that the sequential procedures form an applicative structure. The proofs differ in some essential ways, and in both cases they are non-trivial.\end{rem}
The class of sequential procedures is given by a coinductive definition. We may of course view the clauses as clauses in an inductive definition. If we do so, and in addition restrict $a$ to finite sets $A \subseteq \N$ instead of $\N$ (meaning, in effect, that $P_a = \bot$ if $a \not \in A$), we get the \emph{finite sequential procedures}. 
\begin{defi}\label{2.2} 
Each procedure $P$ with variables among $x_1 , \ldots , x_n$ will define a \emph{ sequential functional} $ \llbracket P\rrbracket{}:(\N_\bot \rightarrow \N_\bot)^n \rightarrow \N_\bot$ by the \emph{inductive} definition of $\llbracket P\rrbracket{}(\vec f)$:

\begin{enumerate}
\item $\llbracket P\rrbracket{}(\vec f) = a$ when $P = a$.
\item If \begin{center}$P$ =  case $x_i(Q)$ of $\{a \Rightarrow P_a\}_{a \in \N}$,\end{center} we let $\llbracket P\rrbracket{}(\vec f) = b$ if for some $a\in \N$  and  for some $c \in \N_\bot$ we have that $\llbracket Q\rrbracket{}(\vec f) = c$, $f_i(c) = a$, and $\llbracket P_a\rrbracket{}(\vec f) = b$.
\item$\llbracket P\rrbracket{}(\vec f) = \bot$ if no value is given through (1) or (2). \end{enumerate}\end{defi}
\noindent We write $\llbracket P\rrbracket{}(\vec f)\!\!\downarrow$ if there is some $b \in \N$ such that $\llbracket P\rrbracket{}(\vec f) = b$.
There may be three reasons for why $\llbracket P\rrbracket{}(\vec f) = \bot$. One is that the syntax tree of $P$ is not well founded, and that we follow an infinite branch in this syntax tree when we evaluate $\llbracket P\rrbracket{}(\vec f)$. Another is that  we, during the evaluation, are led to some $a \Rightarrow \bot$. The third reason is that we need to evaluate some $f_i(\llbracket Q\rrbracket{}(\vec f))$ where the value of $f_i(\llbracket Q\rrbracket{}(\vec f)) = \bot$. There is a qualitative difference here: in the first case the process does not terminate, while in the other cases the process terminates with the value $\bot$. In our interpretation $\llbracket P\rrbracket{}$ we do not distinguish between these  cases.
\begin{defi}\label{2.2a}  Let $k \in \N$, and let $\F_k$ be the set of sequential functionals of our fixed type over the interpretation $\{\bot, 0 , \ldots , k\}$ of $\iota$. We let $\F$ denote the full set of sequential functionals of the type in question.
For each $k \in \N$, let $(\nu_{k},\pi_{k})$ be the standard sequential embedding-projection pair between $\F_k$ and $\F$ induced by the inclusion map and the restriction map at base type.
\end{defi}
The crucial properties are:
\begin{center}
 $\pi_{k} \circ \nu_{k} = id_{\F_k}$ and $\nu_{k} \circ \pi_{k} \sqsubseteq id_{\F}$
\end{center}
Since we will consider $\F_k$ to be a subset of $\F$  we will also consider $\nu_{k}$ to be the inclusion map, and not mention it again. We will need $\pi_{k}$ in the sequel, viewing it as a restriction map. We will use the obvious fact that each $\F_k$ is a finite set.

We say that two procedures are \emph{equivalent} if they define the same functional. In \cite{N-S} a procedure was defined to be \emph{normal} if we will only use $Q$s of the form $c$ with $c \in \N$ in the coinductive definition. It is not the case that all procedures will be equivalent to procedures in normal form, so this term is slightly misleading.
\smallskip

 As usual, we will say that a function $f$ is \emph{strict} if $f(\bot) = \bot$. 
It is the presence of non-strict functions, and the fact that sometimes $\llbracket P\rrbracket{}(\vec f)$ may terminate \emph{because} some $f_i$ is not strict, that makes it impossible to prove a normalisation theorem.  Plotkin  proved that the strict functions between \emph{concrete domains} will form \emph{dcpo}s, see  Kahn and Plotkin \cite{KP} for definitions.
From now on, $S$ will always be a subset of $\{1 , \ldots , n\}$.

\begin{samepage}
\begin{defi}\label{2.3}
 If  $F$ is a sequential functional, we let $F_S$ be defined by
$F_{S}(\vec f) = a$ if there is a sequence $\vec g \sqsubseteq \vec f$ such that 
\begin{itemize}
\item $g_i$ is strict when $i \in S$,
\item $F(\vec g) = a$.
\end{itemize}
\end{defi}
\end{samepage}

\noindent $F_{S}$ will be sequential when $F$ is, since $F_S$ is the composition of $F$ and sequential ``strictification"-functionals. This argument requires that we know that the sequential functionals form an applicative structure, see \cite{L-N} or \cite{N-S}. 
Definition \ref{2.4} will contain a direct construction of a procedure for $F_S$ from a procedure for $F$ in the finite case, and then it will follow that $F_S$ is sequential from our main theorem.

 A finite sequential procedure may be nested in two directions, in depth (or to the left) and in length (or to the right).  A \emph{left bound} of a procedure will be $\geq$ the maximal number of nestings to the left in any branch of the syntax tree.  We will prove that if $F$ is a finite sequential procedure, there is a special procedure for $F_{S}$ with a left bound depending only on $S$, and this will be our alternative to the normal form theorem for sequential procedures restricted to strict arguments. 
\begin{defi}\label{2.4} Let $P$ be a finite, non-empty sequential procedure.
We define the procedure $P[S]$ by cases as follows:
\begin{enumerate}[label=\arabic*.]
\item If $P = c$, we let $P[S] = P$.
\item Let $P$ =  case  $x_i(Q)$ of $ \{a \Rightarrow P_a\}$:
\begin{itemize}[align=parright,labelsep=1.5em]
\item[2.1.] If $i \not \in  S$, let
\begin{center}$P[S]$ = case $x_i(Q[S \cup \{i\}])$ of $\{a \Rightarrow P_a[S]\}$\end{center}
\item[2.2.] If $i \in S$, and $Q = c$ we  let 
\begin{center} $P[S]$ =  case $x_i(c)$ of $ \{a \Rightarrow P_a[S]\}$. \end{center}
\item[2.3.] If $i \in S$ and $Q$ =  case $x_j(R)$ of $ \{b \Rightarrow Q_b\}$, we first let
\begin{center}
$\hat P = $ case $x_j(R)$ of $ \{ b \Rightarrow$ case $ x_i(Q_b)$ of $ \{a \Rightarrow P_a\}\}$
\end{center} and then let $P[S] = \hat P[S]$.
\end{itemize}
\end{enumerate}

\end{defi}
\begin{lem}\label{2.5}{\em
Let $P$ be a finite sequential procedure, $S \subset \{1 , \ldots , n\}$.
\begin{enumerate}[label=\alph*)]
\item $P[S]$ is well defined, i.e. the rewriting terminates.
\item If $F = \llbracket P\rrbracket{}$, then $F_{S} = \llbracket P[S]\rrbracket{}$.
\end{enumerate}
}\end{lem}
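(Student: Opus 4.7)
The plan is to prove (a) and (b) jointly by induction, with the delicate case in both being clause 2.3 of Definition~\ref{2.4}. Write $s(P)$ for the size of $P$ and define the left-depth $\ell(P)$ by $\ell(c)=0$ and $\ell(\text{case}\;x_i(Q)\;\text{of}\;\{a\Rightarrow P_a\})=\ell(Q)+1$.

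For (a), the obstacle is that clause 2.3 may strictly increase size: one computes $s(\hat P)=s(P)+(|B|-1)(1+\sum_a s(P_a))$, since the $P_a$ branches are duplicated $|B|$ times over the inner case of $Q$. What \emph{does} strictly decrease is $\ell$, because $\ell(\hat P)=\ell(R)+1<\ell(R)+2=\ell(P)$. I would therefore use a nested induction: outer on $s(P)$, inner on $\ell(P)$. The inner induction shows that a chain of 2.3 rewrites terminates at some $P^{*}$ for which 2.1 or 2.2 applies; the sub-procedures then invoked are either the exposed left-spine subterm $R$ of the original $Q$ (so $s(R)<s(P)$) or one of the branch expressions $\text{case}\;x_i(Q_b)\;\text{of}\;\{a\Rightarrow P_a\}$, whose size $1+s(Q_b)+\sum_a s(P_a)$ is strictly less than $s(P)$ since $s(Q_b)<s(Q)$. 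The outer induction on $s$ then closes the recursion.

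For (b), I would proceed by the same case analysis. The constant case and clause 2.2 are routine from the inductive hypothesis. Clause 2.3 is the critical step: $\llbracket P\rrbracket$ and $\llbracket\hat P\rrbracket$ need not agree in general, but they coincide on every $\vec g$ with $g_i$ strict, because the only way the original evaluation of $Q$ on $\vec g$ could differ from the rearranged schedule of $\hat P$ is by exploiting $g_i(\bot)\neq\bot$, which strictness forbids. Thus $(\llbracket P\rrbracket)_S=(\llbracket\hat P\rrbracket)_S$, and the inductive hypothesis applied to $\hat P$ yields $\llbracket P[S]\rrbracket=\llbracket\hat P[S]\rrbracket=(\llbracket\hat P\rrbracket)_S=F_S$. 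For clause 2.1, I would verify the equality pointwise on $\vec f$: the forward direction combines witnesses $\vec h^Q$ for $(\llbracket Q\rrbracket)_{S\cup\{i\}}(\vec f)=c$ and $\vec h^a$ for $(\llbracket P_a\rrbracket)_S(\vec f)=b$ into a single $\vec g$ by componentwise join, augmenting $g_i$ at the non-$\bot$ value $c$ so that $g_i(c)=a$; the backward direction takes $\vec h^a=\vec g$ and $\vec h^Q=\vec g$ with $g_i$ strictified, invoking the monotonicity fact that $g_i(\bot)\neq\bot$ forces $g_i$ to be constant on $\N_\bot$, which reconciles the original $c$-value with the $\bot$-value produced after strictification.

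The principal obstacle in both parts is clause 2.3: for (a) the size of $\hat P$ exceeds that of $P$, so a naive structural induction fails and one must use the left-depth to sequence the rewrites while keeping the ultimate sub-procedure calls smaller than $s(P)$; for (b) the rewriting preserves not $\llbracket\cdot\rrbracket$ itself but only its $S$-strictification, so the inductive step requires the strictness-based semantic equality noted above rather than a direct substitution of equals.
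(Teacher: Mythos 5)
Part (b) of your proposal is sound, and it is essentially the argument the paper itself gives: the same case analysis along Definition~\ref{2.4}, with the two key semantic facts being that in clause 2.3 one has $\llbracket P\rrbracket(\vec g)=\llbracket\hat P\rrbracket(\vec g)$ whenever $g_i$ is strict (so the $S$-collapses agree, since $i\in S$ forces the witnessing sequences to be strict in slot $i$), and that clause 2.1 splits on strictness of the $i$-th argument. The genuine gap is in part (a). Your claim that after a maximal chain of 2.3 rewrites the branch sub-procedures invoked by the final 2.1/2.2 step still have the one-step shape $\mathrm{case}\;x_i(Q_b)\;\mathrm{of}\;\{a\Rightarrow P_a\}$, hence size $<s(P)$, is false once the chain has length at least two. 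Take $i,j\in S$ and $P=\mathrm{case}\;x_i(Q)\;\mathrm{of}\;\{a\Rightarrow P_a\}$, $Q=\mathrm{case}\;x_j(R)\;\mathrm{of}\;\{b\Rightarrow Q_b\}$, $R=\mathrm{case}\;x_l(T)\;\mathrm{of}\;\{c\Rightarrow R_c\}$. The first 2.3 step produces branches $A_b=\mathrm{case}\;x_i(Q_b)\;\mathrm{of}\;\{a\Rightarrow P_a\}$; the second produces branches $B_c=\mathrm{case}\;x_j(R_c)\;\mathrm{of}\;\{b\Rightarrow A_b\}$ with $s(B_c)=1+s(R_c)+\sum_b\bigl(1+s(Q_b)+\sum_a s(P_a)\bigr)$, in which the family $\{a\Rightarrow P_a\}$ is duplicated once per $b$-branch. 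As soon as there are two or more $b$-branches and $\sum_a s(P_a)$ is sizable, $s(B_c)$ exceeds $s(P)=3+s(T)+\sum_c s(R_c)+\sum_b s(Q_b)+\sum_a s(P_a)$, so your outer induction hypothesis on size cannot be applied to the calls $B_c[S]$, and the nested $(s,\ell)$ scheme does not close. This is exactly the duplication you yourself computed for a single step, compounded along the chain.

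The repair is to replace syntactic size by a quantity that the 2.3 shuffle cannot increase, which is what the paper does: it inducts on the maximal possible length of an evaluation of $\llbracket P\rrbracket(\vec f)$, counting the argument evaluations performed, with arguments forcibly evaluated to a value in $\N_\bot$ even when the relevant $f_i$ is non-strict. Under a 2.3 step the evaluations of $R$, of some $Q_b$ and of some $P_a$ (with all their sub-evaluations) are all still performed, merely rescheduled, while the evaluation of $Q$ as a separate argument is omitted; the paper concludes that $\hat P$ has lower complexity than $P$, and in any case the measure does not increase, whereas it strictly decreases when passing to $Q$ or to a branch $P_a$ in clauses 2.1 and 2.2. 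Your scheme therefore survives with this evaluation-length measure in the outer position and your left-depth $\ell$ (which does strictly drop under 2.3, as you observe, and is the real reason 2.3 chains terminate) as a tie-breaker; alternatively one can compare the multiset of argument subterms evaluated, in which 2.3 replaces $Q$ by the strictly smaller $Q_b$. What fails in your write-up is only, but fatally for the induction, the bookkeeping that feeds the post-chain branch calls back into a size-based hypothesis.
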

\begin{proof}
a) is proved by finding a suitable complexity measure as follows:
When $P$ is a finite sequential procedure, the evaluation of $\llbracket P\rrbracket{}(\vec f)$ will have a bounded finite length, even if we insist on evaluating $\llbracket Q\rrbracket{}(\vec f)$ to a value in $\N_\bot$ when $f_i$ is not strict and $f_i(\llbracket Q\rrbracket{}(\vec f))$ is asked for in the process. For input $\vec f$ we will let the length of the evaluation of $\llbracket P\rrbracket{}(\vec f)$ be the number of evaluations $\llbracket Q\rrbracket{}(\vec f)$ we thus have to perform in the process.  Then we use induction on the maximal possible length of an evaluation. In Case 2.3 $\hat P$ will have lower complexity than $P$ since we omit the need to evaluate $Q(\vec f)$, while both in the evaluation of $\llbracket P\rrbracket{}(\vec f)$ and $\llbracket \hat P\rrbracket{}(\vec f)$ we have to evaluate $\llbracket R\rrbracket{}(\vec f)$, some $\llbracket P_a\rrbracket{}(\vec f)$ and some $Q_b(\vec f)$ (with all their sub-evaluations), though in different orders in the two cases. For the rest of the cases, it is easy to see how the induction hypothesis  is used.

In order to prove b) we will use  induction on the number of rewritings following the clauses of Definition \ref{2.4} we need in order to find $P[S]$, and prove that $\llbracket P[S]\rrbracket{}(\vec f) = \llbracket P\rrbracket{}_{S}(\vec f)$ for each input $\vec f$.
We go through each of the steps:
\begin{itemize}[align=parright,leftmargin=3em,labelsep=2em,midpenalty=99]
\item[1.] This case is trivial

\item[2.1.] There will be two sub-cases:
\newline
{\em Case 1 - $f_i$ is strict}: Then $\llbracket Q\rrbracket{}_{S}(\vec f) = \llbracket Q\rrbracket{}_{S \cup \{i\}}(\vec f)$ by construction, and the case follows from the induction hypothesis.
\newline
{\em Case 2 -  $f_i$ is not strict}: Then it does not matter whatever we rewrite $Q$ to, and by the induction hypothesis $\llbracket P_a[S]\rrbracket{}(\vec f) = \llbracket P_a\rrbracket{}_{S}(\vec f)$, so the conclusion holds.
\item[2.2.] This case is trivial.
\item[2.3.] In this case we use that when $f_i$ is strict, then $\llbracket P\rrbracket{}(\vec f) =
  \llbracket \hat P\rrbracket{}(\vec f)$, and the rest follows by the induction hypothesis. \qedhere
\end{itemize}
\end{proof}
\begin{defi} Let $P$ be a sequential procedure. We say that $P$ is in \emph{$S$-normal form} if $P_S = P$.\end{defi}

\noindent A finite sequential procedure in $S$-normal form will have the cardinality of \[\{1, \ldots , n\}\setminus S\] as a left bound.

\section{The main Theorem}\label{Sec3}
In this section, we will make use of some basic facts about the finite sequential functionals proved in e.g. \cite{L-N,N-S}. One fact is that any sequential functional will be the least upper bound of a sequence of finite ones, in the sense that the graph is the union of the graphs of the approximations.  Another fact is that if two finite sequential functionals $F$ and $G$ are bounded by a sequential functional $H$, then there is a sequential least upper bound $F \sqcup G$ of the two. This is not necessarily the set theoretical union of $F$ and $G$:
\begin{exa} Let $F$ and $G$ of type $(\iota \rightarrow \iota) \rightarrow \iota$ be defined by
\begin{itemize}
\item $F(f) = 0$ if $f(0) = 0$
\item $G(f) = 0$ if $f(1) = 0$
\end{itemize}
Then the least sequential upper bound will be the constant zero, while the least set theoretical upper bound will just be
\begin{center}$H(f) = 0$  if $f(0) = 0$ or $f(1) = 0.$
\end{center}
The point is that $H(f)$ is defined when $f(0) = 0$ and $f(1) = \bot$ and when $f(0) = \bot$ and $f(1) = 0$ and a sequential procedure cannot handle this.\end{exa}
\subsection{Theorem and Key Lemma}

This section will be devoted to the proof of the following
\begin{thm}\label{Main}{\em  Let $\{F_k\}_{k \in \N}$ be an increasing sequence of sequential functionals of type $(\iota \rightarrow \iota)^n \rightarrow \iota$.
Then the least upper bound $$\bigsqcup_{k \in \N}F_k,$$ seen as a function, will be sequential.}
\end{thm}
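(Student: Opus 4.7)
The plan is to reduce to finite approximations, use Lemma~\ref{2.5} to obtain procedures of uniformly bounded left-nesting, and construct a limit procedure by downward induction on $|S|$ for a strengthened statement about $F_S$.

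First I would pass to finite approximations via the embedding--projection pairs of Definition~\ref{2.2a}. For each $k$ the chain $\pi_{k}(F_0) \sqsubseteq \pi_{k}(F_1) \sqsubseteq \cdots$ lies in the finite set $\F_k$ and so stabilises at some $G_k \in \F_k$; standard properties of $(\nu_k,\pi_k)$ then give $\bigsqcup_k G_k = \bigsqcup_k F_k$. It therefore suffices to realise $\bigsqcup_k G_k$ by a sequential procedure, where each $G_k = \llbracket P_k\rrbracket{}$ for some finite procedure $P_k$.

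I would next prove by downward induction on $|S|$ that $\bigsqcup_k (G_k)_S$ is sequential for every $S \subseteq \{1,\ldots,n\}$; the theorem is the case $S = \emptyset$, since $F_\emptyset = F$. By Lemma~\ref{2.5}(b) one has $(G_k)_S = \llbracket P_k[S]\rrbracket{}$, and each $P_k[S]$ has left-bound $n-|S|$ uniformly in $k$, which is what makes the induction viable. In the base case $|S| = n$ every $P_k[S]$ is a finite decision tree whose internal nodes are queries $x_i(c)$ for $c \in \N_\bot$ and whose leaves lie in $\N_\bot$; I would build a possibly infinite limit tree coinductively, choosing at each partial observation $\sigma$ of $\vec f$ an eventually-stable next query (or return value) that is produced by all but finitely many $P_k[S]$ at state $\sigma$. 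In the inductive step $|S| < n$, the outermost query of $P_k[S]$ either has index in $S$ with a constant argument (already base-case shape) or has the form $x_i(Q_k)$ with $i \not\in S$; in the latter situation the inner procedures $Q_k$ compute an $(S \cup \{i\})$-reduced chain to which the induction hypothesis applies, and the resulting sequential procedure is inserted as the $Q$-component of the limit procedure.

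The main obstacle is the coinductive limit construction, in particular the reconciliation of the possibly divergent root queries of $(P_k[S])_k$ into a single coherent procedure: two members of the chain may compute comparable functionals while beginning with different queries, so one cannot simply pick one $P_k[S]$ and extend. The uniform left-bound from Lemma~\ref{2.5}, the finiteness of each $\F_k$, and the monotonicity of $(G_k)_k$ together force enough eventual agreement among the $P_k[S]$ at each reachable state to support a well-defined coinductive choice; carrying this out precisely, and verifying that the resulting infinite tree indeed realises $\bigsqcup_k (G_k)_S$, is the technical core of the proof.
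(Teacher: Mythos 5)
Your skeleton coincides with the paper's: normalise the chain via the projections $\pi_k$ (your stabilisation argument is exactly Lemma~\ref{lemma3.2}), then prove by downward induction on $|S|$ that $\bigsqcup_k (F_k)_S$ is sequential (the paper's Key Lemma, Lemma~\ref{KL}), with Lemma~\ref{2.5} supplying $S$-normal procedures of uniformly bounded left-nesting. But the step you defer as ``the technical core'' is the entire content of the proof, and the mechanism you propose for it fails. Sequential procedures are not canonical: two procedures denoting the same, or comparable, finite functionals may issue their queries in different orders, so the $P_k[S]$ can be chosen so that at a fixed reachable state the root query alternates between, say, $x_1$ and $x_2$ forever. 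The ``next query produced by all but finitely many $P_k[S]$ at state $\sigma$'' then simply does not exist; monotonicity of the $(F_k)_S$ and finiteness of the $\F_k$ constrain the \emph{functionals}, not the syntax trees, and force no eventual syntactic agreement. For the same reason your inductive step mis-locates the induction hypothesis: the inner arguments $Q_k$ of the outermost queries of the $P_k[S]$ need not form an increasing chain --- they need not even be pairwise comparable --- so there is no $(S\cup\{i\})$-reduced chain to which the hypothesis applies.

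The paper's resolution replaces syntactic agreement by a semantic notion. An $S$-query $(i,G)$ is \emph{critical} for $F$ at a path $\theta$ if every $\vec f$ matching $\theta$ with $F(\vec f)$ defined must answer it; Lemma~\ref{lemma3.6} extracts such a query from wherever the evaluation of $P_l$ on the current path gets stuck, which may be at arbitrary depth, not at the root. Coherence across levels is then \emph{manufactured} rather than observed: the sets $X_l$ of admissible critical queries are finite, nonempty, and mapped into one another by the projections (Lemma~\ref{lemma3.11}), so K\"onig's lemma yields a single index $i$ and a genuinely increasing chain $\{H_l\}$ of $S\cup\{i\}$-functionals; only at this point does the downward induction hypothesis apply, producing the sequential $H = \bigsqcup_l H_l$ used as the argument of the next query in the limit procedure. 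You would also still need the completeness argument your sketch omits: if $F_S(\vec f)$ is defined with $\vec f \in \I^n_{k_0}$, the parameter $k$ of the construction stays below $k_0$ while the projected path grows strictly inside the finite level $k_0$, forcing termination. Without the critical-query and K\"onig machinery your coinductive choice is not well defined, so the proposal has a genuine gap precisely where you flagged it.
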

If we do not allow for non-strict arguments at all, this theorem is part of the folklore, and is indeed covered by the more general result due to Plotkin mentioned in the introduction. The theorem will be a special case of the Key Lemma, see Lemma \ref{KL}. The Key Lemma will be proved by induction, and the mentioned folklore result can be seen as the base case.

 Even in the simplest case, for $n = 1$, there will be an increasing sequence $\{F_k\}_{k \in \N}$ of finite sequential functionals of type $(\iota \rightarrow \iota) \rightarrow \iota$ that can be enumerated in a primitive recursive way such that the least upper bound has the complexity of the Turing jump:
 \begin{exa}
 Let $A = \bigcup_{k \in \N} A_k$ be a subset of $\N$ where $A$ is Turing equivalent to the Turing jump, each $A_k$ is finite with $A_k \subseteq A_{k+1}$ for each $k$ and such that the relation $n \in A_k$ is primitive recursive.

 Let $F_k(f) = 0$ if $f(0) \in A_k$ or if both $f(0) \leq  k$ and $f(1) \leq k$. The least upper bound $F$ will be a total functional of pure type 2. Even if we allow for repetitions of queries, there will be no computable sequential procedure for $F$. For any such procedure we will have that $f(0) \not \in A$ if and only if $f(1)$ is eventually called for. This means that the complement of $A$ is computable in any procedure for $F$.
 \end{exa} Thus it is impossible to give a constructive proof  of our main result. Arithmetical comprehension will suffice, but we will not put stress on pointing out where non-constructive arguments are used.

Throughout Section \ref{Sec3} we will let $\{F_k\}_{k \in \N}$ be a $\sqsubseteq$ - increasing sequence of finite sequential functionals and let $F = \bigsqcup_{k \in \N}F_k$ be the pointwise least upper bound. As noted in the start of the section, this suffices. We will use the terminology of the previous section. 
\begin{lem}\label{lemma3.2}{\em Without loss of generality, we may assume that $F_k\in \F_k$ for each $k \in \N$ and that $F_k = \pi_k(F_l)$ when $k \leq l$. }\end{lem}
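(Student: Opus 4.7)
The plan is to define a new sequence by $F'_k := \pi_k(F)$ and take this as the replacement. The definition makes sense and lands in $\F_k$: by continuity of the base-type restriction on the flat domain $\N_\bot$, we have $\pi_k(F) = \bigsqcup_j \pi_k(F_j)$; each $\pi_k(F_j)$ lies in $\F_k$; and since $\F_k$ is finite this $\sqsubseteq$-increasing sequence stabilizes, so $F'_k \in \F_k$.

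For the compatibility $F'_k = \pi_k(F'_l)$ when $k \leq l$, the standard properties of embedding-projection pairs yield the operator identity $\pi_k = \pi_k \circ \nu_l \circ \pi_l$ as a map $\F \to \F_k$, whence $\pi_k(F'_l) = \pi_k(\nu_l(\pi_l(F))) = \pi_k(F) = F'_k$, using the paper's convention that the inclusion $\nu_l$ is implicit. Unfolding definitions, this identity reduces to $\pi^\iota_k \circ \pi^\iota_l = \pi^\iota_k$ at base type together with the observation that $\nu_l(\pi_l(h)) = h$ on any $h$ that already factors through $\{\bot,0,\ldots,l\}$, which applies because the argument to $F$ after the outermost $\nu_k$ takes values in $\{\bot,0,\ldots,k\} \subseteq \{\bot,0,\ldots,l\}$.

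Finally, one has to verify that the new sequence is still $\sqsubseteq$-increasing with pointwise supremum $F$, so that the hypothesis of the theorem is preserved. Monotonicity reduces to the operator inequality $\nu_k \pi_k \sqsubseteq \nu_{k+1} \pi_{k+1} \sqsubseteq \mathrm{id}_\F$, verified by a short case analysis at base type and propagated up by monotonicity of $F$. The half $\bigsqcup_k \nu_k(F'_k) \sqsubseteq F$ is immediate. The main obstacle, and the only nontrivial step, is the reverse inequality: given $\vec f$ with $F(\vec f) = a \in \N$, choose $j$ with $F_j(\vec f) = a$. Since $F_j$ is a finite sequential functional, a run of some finite sequential procedure for $F_j$ witnesses $F_j(\vec f) = a$ using only finitely many queries $f_i(c) = a'$ with $c \in \N_\bot$ and $a' \in \N$. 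Taking $k$ larger than $a$, than all queried $a'$, and than every queried $c$ that is not $\bot$ ensures $\nu_k \pi_k(f_i)(c) = f_i(c)$ on every such query, so by structural induction on the procedure the same run witnesses $F_j(\nu_k \pi_k(\vec f)) = a$. Hence $\nu_k(F'_k)(\vec f) = \pi^\iota_k(F(\nu_k \pi_k(\vec f))) = \pi^\iota_k(a) = a$, as required.
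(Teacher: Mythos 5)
Your proposal is correct and is essentially the paper's own argument: the paper defines $G_k = \bigsqcup_{l \geq k}\pi_k(F_l)$, which by your continuity/stabilization observation is exactly your $F'_k = \pi_k(F)$, and in both cases finiteness of $\F_k$ secures existence in $\F_k$ while the preservation of the least upper bound is the finite-support argument you spell out. You merely make explicit the compatibility identity $\pi_k \circ \nu_l \circ \pi_l = \pi_k$ and the same-lub verification that the paper dismisses as clear.
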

\proof
Let $$G_k = \bigsqcup_{k \leq l}\pi_k(F_l).$$ 
Since $\F_k$ is finite, these least upper bounds exist as sequential functionals. Clearly the sequences $\{G_k\}_{k \in \N}$ and $\{F_k\}_{k \in \N}$ will have the same least upper bounds. \qed

From now on we will assume that the sequence we consider satisfies the conclusion of this lemma.
We will prove the following strengthening of Theorem \ref{Main} by reversed induction on the size of $S$:
\begin{lem}[Key Lemma]\label{KL}{\em 
$$\bigsqcup_{k \in \N}(F_k)_S$$ is sequential.}
\end{lem}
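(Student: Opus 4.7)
The plan is to proceed by reversed induction on $|S|$, with Theorem~\ref{Main} recovered as the case $S=\emptyset$. For the base case $|S|=n$, every procedure in $S$-normal form has left bound $0$, so on fully strict inputs each $(F_k)_S$ coincides with $F_k$, and Plotkin's dcpo theorem for strict maps between concrete domains yields a sequential LUB; the values on non-strict inputs are then determined by the strictification in Definition~\ref{2.3}.

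For the inductive step, fix $S$ with $|S|<n$ and assume the Key Lemma for every $\sqsubseteq$-increasing sequence of finite sequential functionals and every $S'\supseteq S$ with $|S'|>|S|$. First I would choose procedures $P_k$ representing $F_k$ compatibly with the projections of Lemma~\ref{lemma3.2}, so that $P_k\sqsubseteq P_{k+1}$ as infinitary trees, and set $Q_k:=P_k[S]$. By Lemma~\ref{2.5}, $\llbracket Q_k\rrbracket=(F_k)_S$, and each $Q_k$ lies in $S$-normal form with left bound $b:=n-|S|$, uniformly in $k$.

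The core idea is then to assemble a (possibly infinite) sequential procedure $Q_\infty$ in $S$-normal form with $\llbracket Q_\infty\rrbracket=G_S:=\bigsqcup_k(F_k)_S$, by matching the top-level structure of the $Q_k$'s. If every $Q_k$ has root $\bot$ then $G_S\equiv\bot$. Otherwise, monotonicity of the $Q_k$'s forces the top-level query index to stabilise at some $i^*$. If $i^*\in S$, the inner argument to $x_{i^*}$ is a constant $c_k\in\N_\bot$ which monotonicity forces to stabilise at some $c^*$. If $i^*\notin S$, then each inner argument is $T_k=Q'_k[S\cup\{i^*\}]$ for the corresponding inner argument $Q'_k$ of $P_k$, and Lemma~\ref{2.5} gives $\llbracket T_k\rrbracket=(\llbracket Q'_k\rrbracket)_{S\cup\{i^*\}}$; the inductive hypothesis applied to the $\sqsubseteq$-increasing sequence $\{\llbracket Q'_k\rrbracket\}_k$ at $S\cup\{i^*\}$ provides a sequential LUB, and hence a sequential procedure, for the inner argument. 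In either case one then recurses on the branch-sequences $\{P_{k,a}\}_k$ at level $S$. Each left-descent strictly enlarges the strict-index set, so the left-recursion bottoms out after at most $b$ steps, while the right-branching may be unbounded and yields a well-defined coinductive limit $Q_\infty$.

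The correctness identity $\llbracket Q_\infty\rrbracket=G_S$ is then routine: any terminating evaluation of $Q_\infty(\vec f)=a$ uses only finitely many nodes, which are already present in some $Q_k$, forcing $(F_k)_S(\vec f)=a$; conversely, $(F_k)_S(\vec f)=a$ implies $Q_k(\vec f)=a$, which is inherited by $Q_\infty$ since $Q_k\sqsubseteq Q_\infty$. I expect the main obstacle to be the very first step: choosing the $P_k$'s so that their top-level queries and their nested $Q$-arguments align across all $k$, which requires some care with the projection compatibility of Lemma~\ref{lemma3.2} and the non-canonical nature of finite procedures. Once this alignment is secured, the recursive construction and the inductive hypothesis handle everything else.
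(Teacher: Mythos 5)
There is a genuine gap, and you have in fact named it yourself: the ``very first step'' of choosing procedures with $P_k \sqsubseteq P_{k+1}$ as infinitary trees is not a detail to be secured with care --- it is essentially equivalent to the theorem being proved. If an increasing chain of functionals always admitted a tree-monotone choice of representing procedures, then the union of the trees $Q_k = P_k[S]$ would itself be a sequential procedure denoting $\bigsqcup_k (F_k)_S$, and the Key Lemma (indeed Theorem~\ref{Main}) would follow at once, with no reversed induction on $|S|$ needed. But nothing supports the alignment: $F_k \sqsubseteq F_{k+1}$ is a relation between functionals and induces no relation between procedures; procedures are highly non-canonical, and a procedure for $F_{k+1}$ may have to interrogate its arguments in an order incompatible with the procedure already chosen for $F_k$, forcing revision of all earlier choices --- the difficulty is to make infinitely many such revisions coherently. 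Lemma~\ref{lemma3.2} gives projection-compatibility of the \emph{functionals} ($F_k = \pi_k(F_l)$ for $k \leq l$), which is strictly weaker than tree-monotonicity of procedures, and the paper's Turing-jump example shows that a limit procedure is in general not computable from the sequence, so no effective or local bookkeeping can produce the coherent choices; some compactness argument is unavoidable. Your subsequent steps inherit the circularity: the stabilisation of the top-level query index, and the application of the induction hypothesis to the inner arguments $\{\llbracket Q'_k\rrbracket\}_k$ (which you would also need to be $\sqsubseteq$-increasing), are all consequences of the unproved alignment rather than of anything established.

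The paper's proof avoids aligning procedures altogether. For a given input $\vec f$ it grows an $S$-path $\theta = (i_1,G_1,b_1), \ldots , (i_m,G_m,b_m)$ recording answers to queries: Lemma~\ref{lemma3.6} supplies, at each level $l$, a \emph{critical} $S$-query for $(F_l)_S$ at $\pi_l(\theta)$ not yet answered by the path; the projection lemmas \ref{lemma3.8} and \ref{lemma3.11} make critical queries at different levels comparable; and K\"onig's lemma extracts a single index $i$ and an increasing sequence $\{H_l\}$ of $S \cup \{i\}$-functionals occurring as critical queries, whose least upper bound $H$ is sequential by the induction hypothesis. The next path entry is then $(i,H,f_i(H(\vec f)))$, and termination when $F_S(\vec f)\!\!\downarrow$ follows because the path, projected to the finite level $k_0$ at which $F_S(\pi_{k_0}(\vec f))\!\!\downarrow$, is properly extended at every step. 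So the correct target of the induction hypothesis is the K\"onig-extracted chain of critical-query functionals, not the inner arguments of syntactically aligned procedures. Your use of $S$-normal forms and of the left bound $n - |S|$ does match the paper, but the mechanism that replaces your alignment assumption is precisely the path/critical-query/K\"onig machinery, and without it your construction of $Q_\infty$ does not get off the ground.
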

\noindent The proof of the Key Lemma will take what remains of this section.
\subsection{Matching paths}
We have to develop some notation and prove some partial results before we are able to give the proof of Lemma \ref{KL}. The argument will be the same for all $S$, except that for $S = \{1 , \ldots , n\}$ we will never be in a situation where we have to appeal to the induction hypothesis.

\begin{samepage}
\begin{defi}\leavevmode
\begin{enumerate}[label=\alph*)]
\item A sequential functional $F$ is an \emph{$S$-functional} if $F_S = F$.
\item An \emph{$S$-query} is a pair $(i,G)$ where $1 \leq i \leq n$ and
\begin{itemize}
\item $G$ is a constant $a \in \N$ if $i \in S$
\item $G$ is an $S \cup \{i\}$-functional if $i \not \in S$
\end{itemize}
\item An \emph{$S$-pre-path} is a finite sequence
$$(i_1,G_1,b_1), \ldots , (i_m,G_m,b_m)$$
where each $(i_j,G_j)$ is an $S$-query and each $b_j \in \N$.
\item Let $\vec f = (f_1 , \ldots , f_n)$ be a sequence of functions in $\N_\bot \rightarrow \N_\bot$ and let $$\theta = (i_1,G_1,b_1), \ldots , (i_m,G_m,b_m)$$ be an $S$-pre-path.
We say that $\vec f$ and $\theta$ \emph{match}  if $f_{i_j}(G_j(\vec f)) = b_j$ for each $j \leq m$.
\item An \emph{$S$-path} is an $S$-pre-path that is matching at least one sequence $\vec f$.
\end{enumerate}
\end{defi}
\end{samepage}

\noindent An $S$-path $ (i_1,G_1,b_1), \ldots , (i_m,G_m,b_m)$ will satisfy the following consistency condition: If $i_j = i_l$ and $G_j$ and $G_l$ are consistent, then $b_j = b_l$. This will not necessarily hold for pre-paths. If $P$ is a finite sequential procedure in $S$-normal form and $$\theta = (i_1,G_1,b_1), \ldots , (i_m,G_m,b_m)$$ is an $S$-path, we may \emph{evaluate} $P$ on $\theta$ as follows:
\begin{itemize}
\item If $P$ is a constant $b$ we output $b$ directly.
\item Let $P = $ case $x_i(Q)$ of $\{b \Rightarrow P_b\}$.
If there is a $j \leq m$ such that $i_j = i$ and $G_j \sqsubseteq \llbracket Q\rrbracket{}$ we let $b_j$ be the first intermediate value of the evaluation, and continue with evaluating $P_{b_j}$ on $\theta$.
If there is no such $j \leq m$, the evaluation stops with output $\bot$.
We write $P(\theta)$ for the output of this process.
\end{itemize}
The paths will be forms of approximations to input sequences in the sense that they will carry the information about an input sequence we may have obtained after a partial evaluation of a sequential procedure. 
\begin{rem} In the definition of how to evaluate a finite procedure on a path, we almost implicitly assume that the functionals $G_j$ in the path are themselves finite, or at least bounded by finite objects, since we require that $G_j \sqsubseteq \llbracket Q\rrbracket{}$. We are going to  make use of paths where the $G$'s are infinite sequential functionals obtained by the induction hypothesis. We will have to work carefully with approximations in order still to be able to evaluate a procedure on a path in a reasonable sense.\end{rem}

It is clear that if $\vec f$ matches $\theta = (i_1,G_1,b_1), \ldots , (i_m,G_m,b_m)$ and $P$ is a sequential procedure in $S$-normal form that evaluated on $\theta$ yields a value $c$, then $\llbracket P\rrbracket{}(\vec f) = c$. As a consequence, we see that if we evaluate two equivalent procedures in $S$-normal forms on the same $S$-path, the results will be consistent, meaning that if they both terminate, the output will be the same. We do not think that the outputs have to be equal, i.e. one may terminate and the other not, but have not produced an example.
\begin{defi}
Let $F$ be an $S$-functional, $\theta = (i_1,G_1,b_1), \ldots , (i_m,G_m,b_m)$ an $S$-path.
An $S$-query $(i,G)$ is \emph{critical} for $F$ at $\theta$ if for every sequence $\vec f$ matching $\theta$  for which $F(\vec f)\!\!\downarrow$ we have that $f_i(G(\vec f)) \in \N$.
\end{defi}

If  $\vec f$ is matching $\theta$ and $f_i(G(\vec f)) \in \N$, then $\vec f$ will match $\theta$ extended with $(i,G,f_i(G(\vec f)))$. Our strategy will be to use $S$-queries critical for $F$ to build up paths matching $\vec f$ in a deterministic way  until, when $F(\vec f) \in \N$, we have sufficient information about $\vec f$ to know the value of $F(\vec f)$.
\begin{lem}\label{lemma3.6}{\em 
Let $F = \llbracket P\rrbracket{}$ be a finite $S$-functional, where $P$ is a procedure in $S$-normal form. Let $\theta$ be a path such that $P$ evaluated on $\theta$ does not yield a value in $\N$.
Assume that there is some $\vec f$ matching $\theta$ such that $F(\vec f) \in \N$.
Then there is an $S$-query $(i,G)$ that is critical for $F$ at $\theta$, and moreover, such that there is no $j \leq m$ with $i_j = i$ and $G_j \sqsubseteq G$.}

\end{lem}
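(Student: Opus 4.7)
The plan is to locate, within the syntax tree of $P$, the precise sub-procedure at which the $\theta$-evaluation gets stuck, and to take its outermost query as the required $S$-query. First I would trace the evaluation of $P$ on $\theta$: since $P$ is finite this walk terminates, either at a leaf $c \in \N_\bot$ or at a case expression $P'$ = case $x_i(Q)$ of $\{b \Rightarrow P_b\}$ with no query of $\theta$ matching. The first alternative is excluded: if $c \in \N$ the output would lie in $\N$, contradicting the hypothesis; and if $c = \bot$, the sync argument described below would force $\llbracket P\rrbracket{}(\vec f) = \bot$ for every $\vec f$ matching $\theta$, contradicting the existence of some $\vec f$ with $F(\vec f) \in \N$. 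Hence the walk halts at a case expression $P'$, and I set $G = \llbracket Q\rrbracket{}$.

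Next I would verify the simple requirements. The condition that no $j \leq m$ has $i_j = i$ and $G_j \sqsubseteq G$ is exactly the stopping condition of the walk. To see that $(i, G)$ is an $S$-query, I appeal to the structural constraints that $S$-normal form imposes on each sub-procedure: inspection of the clauses of Definition \ref{2.4} shows that if $i \in S$ then $Q$ must be a constant (cases 2.2 and 2.3 enforce this), and if $i \not\in S$ then $Q$ is itself in $S \cup \{i\}$-normal form, so $G = \llbracket Q\rrbracket{}$ is an $S \cup \{i\}$-functional as required.

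The heart of the argument is the sync claim establishing criticality. Given any $\vec f'$ matching $\theta$ with $F(\vec f') \in \N$, I would show by induction along the walk $P = P^{(0)}, P^{(1)}, \ldots, P^{(k)} = P'$ that $\llbracket P^{(j)}\rrbracket{}(\vec f') = F(\vec f') \in \N$ for each $j$. The key step: at $P^{(j)}$ = case $x_{i_j}(Q^{(j)})$ of $\{b \Rightarrow P^{(j)}_b\}$, the walk used some $(i_l, G_l, b_l)$ in $\theta$ with $i_l = i_j$ and $G_l \sqsubseteq \llbracket Q^{(j)}\rrbracket{}$; by monotonicity of this inequality and of $f'_{i_l}$, one has $b_l = f'_{i_l}(G_l(\vec f')) \sqsubseteq f'_{i_l}(\llbracket Q^{(j)}\rrbracket{}(\vec f'))$, and since $b_l \in \N$ equality holds. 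The inductive clause of Definition \ref{2.2} therefore selects the branch $b_l$, matching the walk, and carries the value $F(\vec f')$ into $P^{(j+1)}$. In particular $\llbracket P'\rrbracket{}(\vec f') \in \N$, and unpacking Definition \ref{2.2} one last time at $P'$ forces $f'_i(\llbracket Q\rrbracket{}(\vec f')) = f'_i(G(\vec f')) \in \N$, which is exactly the criticality condition.

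The main obstacle is the careful statement and execution of the sync argument: the evaluation $\llbracket P\rrbracket{}(\vec f')$ is defined inductively through Definition \ref{2.2} rather than as a deterministic tree walk, so linking the abstract $\theta$-walk to the concrete evaluation on $\vec f'$ rests crucially on the monotonicity passage above combined with the assumption $b_l \in \N$ to pin down equality. This same sync claim is reused to rule out the $c = \bot$ leaf case of step one, so it has to be in hand early in the proof. A small additional care is needed in the verification that $(i,G)$ is an $S$-query when $i \in S$, to rule out the degenerate subcase $Q = \bot$ formally permitted by Definition \ref{2.1}; this can be handled by choosing $P$ as a reduced representative of its $S$-normal-form class.
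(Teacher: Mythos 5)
Your proposal is correct and takes essentially the same route as the paper's proof: trace the evaluation of $P$ on $\theta$ to the case node $P' = $ case $x_i(Q)$ of $\{a \Rightarrow P'_a\}$ where it gets stuck, set $G = \llbracket Q\rrbracket{}$, and use the synchronisation of the $\theta$-walk with the evaluation on any matching $\vec f$ to obtain criticality, the stopping condition yielding the extra requirement. You simply make explicit several points the paper leaves implicit -- the monotonicity-plus-flat-domain step pinning down the branch $b_l$, the exclusion of leaf endpoints, the verification via $S$-normal form that $(i,G)$ is an $S$-query, and the degenerate $Q = \bot$ subcase -- all of which are sound fillings-in rather than deviations.
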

\proof
For any $\vec f$ that matches $\theta$, the evaluation of $P$ on $\vec f$ will follow the same path through the procedure tree  as the evaluation of $P$ on $\theta$, until the latter comes to a halt. This must be because the evaluation reached a subprocedure \begin{center}
$P' = $ case $x_i(Q)$ of $\{a \Rightarrow P'_a\}$ \end{center} while there is no $j \leq m$ such that $i = i_j$ and $G_j \sqsubseteq \llbracket Q\rrbracket{}$. For the evaluation on $\vec f$ to proceed, we must have that $f(\llbracket Q\rrbracket{}(\vec f)) \in \N$.
Let $G = \llbracket Q\rrbracket{}$. Then $(i,G)$ will be critical for $F$ at $\theta$ and satisfy the extra requirement.
\qed

\subsection{Projections and preservation}
One problem with the concepts we have defined so far is that if $P$ is a procedure that we may evaluate on a path $\theta$, and we extend the $G_j$'s in the path to some $G'_j$'s, then the evaluation may not work anymore. This is not so strange as it may seem. If we extend the items of a path, the path will match more function sequences $\vec f$, also some for which the original procedure does not terminate. However, it is in order to handle the technical problem induced by this phenomenon that we assume that our increasing sequence $\{F_k\}_{k \in \N}$ has the properties of Lemma \ref{lemma3.2}. We will take the liberty to extend the use of the term $\pi_k$ for the projection to $\F_k$ for all types in question, in particular to $\I_k$ and $\I^n_k$ defined by:
\begin{defi}\leavevmode
\begin{enumerate}[label=\alph*)]
\item Let $\I$ be the functions of type $\iota \rightarrow \iota$, and let $\I_k$ be the corresponding set of functions in $\{\bot,0,\ldots,k\} \rightarrow \{\bot,0, \ldots , k\}$.
\item If $(i,G)$ is an $S$-query, we let $\pi_k(i,G) = (i,\pi_k(G))$.
\item If $\theta = (i_1,G_1,b_1), \ldots , (i_m,G_m,b_m)$ is a path, we let $$\pi_k(\theta) = (i_1,\pi_k(G_1),b_1), \ldots , (i_m,\pi_k(G_m),b_m).$$
\end{enumerate}\end{defi}
\noindent Notice that $\pi_k(\theta)$ is not necessarily a path, since it does not have to match any $\vec f$ if too much information is removed.

\begin{lem}\label{lemma3.8}\leavevmode
\begin{enumerate}[label=\alph*)]
\item If $F \in \F$ and $\vec f \in \I^n_k$, then $\pi_k(F(\vec f)) = \pi_k(F)(\vec f)$.
\item If $F \in \F_k$ and $\vec f \in \I^n$, then $F(\vec f) = F(\pi_k(\vec f))$.
\item If $\vec f \in \I^n_k$ and $\theta$ is a path matching $\vec f$, then $\pi_k(\theta)$ is a path matching $\vec f$.
\end{enumerate}
\end{lem}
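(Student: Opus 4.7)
All three claims record the compatibility of the embedding-projection pair $(\nu_k,\pi_k)$ with evaluation at higher types. The plan is to first recall the inductive recipe that lifts $\pi_k$ and $\nu_k$ from base type: at function type one has $\pi_k(F)(\vec g) = \pi_k(F(\vec g))$ for $F \in \F$, $\vec g \in \I^n_k$, and (because $\F_k$ is identified with its image under $\nu_k$) $F(\vec f) = F(\pi_k(\vec f))$ for $F \in \F_k$, $\vec f \in \I^n$, with the base-type embedding absorbed into the inclusion $\{\bot,0,\ldots,k\}\subseteq \N_\bot$. I would write out this recipe as a preliminary step, since the paper has set up $\nu_k$ as inclusion and $\pi_k$ as restriction at base type, and the higher-type versions are the standard lifts.

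With the recipe in hand, (a) is the direct application of the first identity to $\vec f \in \I^n_k$, and (b) is the direct application of the second. For (b) I also extract the base-type specialisation for functions, $f(c) = f(\pi_k(c))$ for $f\in\I_k$ and $c \in \N_\bot$; this comes from the same unfolding, since a function in $\I_k$ embedded in $\I$ factors through $\pi_k$ on its input (values $>k$ are collapsed to the value at $\bot$).

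Part (c) is the step where I need both preceding parts. Given the matching condition $f_{i_j}(G_j(\vec f)) = b_j$, I must verify $f_{i_j}(\pi_k(G_j)(\vec f)) = b_j$. If $G_j$ is an $S\cup\{i_j\}$-functional, part (a) gives $\pi_k(G_j)(\vec f) = \pi_k(G_j(\vec f))$; if $G_j = a \in \N$ is a constant, then $\pi_k(G_j)(\vec f) = \pi_k(a) = \pi_k(G_j(\vec f))$ directly from the base-type definition. Either way, using $f_{i_j}\in\I_k$ and the base-type analogue of (b),
\[
f_{i_j}(\pi_k(G_j)(\vec f)) = f_{i_j}(\pi_k(G_j(\vec f))) = f_{i_j}(G_j(\vec f)) = b_j,
\]
which is exactly the matching condition for $\pi_k(\theta)$ with respect to $\vec f$.

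The only real obstacle is bookkeeping rather than mathematical depth: one has to treat the two shapes of $S$-queries (constants and functionals) uniformly under $\pi_k$, and be disciplined about when the inclusion $\nu_k$ is being suppressed so as not to conflate $\F_k$ as an abstract set with $\F_k$ as a subset of $\F$. No appeal to the Key Lemma or to any induction on $S$ is needed, which is why the author presents this lemma as pure infrastructure before the main argument.
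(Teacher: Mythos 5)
Your proposal is correct and takes essentially the same route as the paper: parts (a) and (b) are, as in the paper, immediate from the levelwise (iterative) construction of the embedding-projection pair over the types, and (c) is obtained by combining (a) with the behaviour of the embedded functions $f_{i_j} \in \I_k$. The only difference is presentational: where the paper proves (c) by a case split (either $G_j(\vec f) \leq k$, so that (a) applies, or $f_{i_j}$ is constant $b_j \leq k$), your single identity $f(c) = f(\pi_k(c))$ for $f \in \I_k$ subsumes both cases in one uniform computation, which is if anything slightly tidier.
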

\proof
a) and b) follows from the iterative construction of the projection $\pi_k$ over the types. The definition of matching requires that for any $(i,G,b)$ in $\theta$ we must have that $f_i(G(\vec f)) = b$. For this to hold, we must either have that $f_i$ is constant $b \leq k$ or that $G(\vec f) \leq k$. In the first case, the claim is trivial, and in the second case we may use $a)$.
This establishes c). \qed
We will extend the $\pi_k$ also to procedures:
\begin{defi} If $P$ is a procedure, we let $\pi_k(P)$ be the procedure we obtain by the following top-down recursive transformation:
\begin{itemize}
\item[-] $\pi_k(a) = a$ if $a \leq k$ and $\bot$ if $k < a$
\item[-] If $P$ =  case $x_i(Q)$ of $\{a \Rightarrow P_a\}_{a \in \N}$, we let $\pi_k(P) = $ case $x_i(\pi_k(Q))$ of $\{a \Rightarrow \pi_k(P_a)\}_{a \leq k}$
\end{itemize}\end{defi}

\noindent It is easy to see that $\pi_k(\llbracket P\rrbracket{}) = \llbracket \pi_k(P)\rrbracket{}$, because any evaluation of $P$ with an input from $\I^n_k$ can only utilise values bounded by $k$ in the evaluation.

\begin{lem}\label{lemma3.11}{\em  Let $\theta$ be a path, $F$ a functional  and  $(i,G)$ be critical for $F$ at $\theta$.
Then $(i,\pi_k(G))$ is critical for $\pi_k(F)$ at $\pi_k(\theta)$.}
\end{lem}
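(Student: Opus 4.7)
My plan is to verify the critical property directly: fix any $\vec f$ matching $\pi_k(\theta)$ with $\pi_k(F)(\vec f) = c \in \N$, and show $f_i(\pi_k(G)(\vec f)) \in \N$.

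First I would lift the hypotheses back to the unprojected level. Because $\pi_k(G_j) \sqsubseteq G_j$ and $\N_\bot$ is flat, monotonicity of each $f_{i_j}$ upgrades $f_{i_j}(\pi_k(G_j)(\vec f)) = b_j$ into $f_{i_j}(G_j(\vec f)) = b_j$, so $\vec f$ also matches $\theta$. Similarly $\pi_k(F) \sqsubseteq F$ combined with flatness gives $F(\vec f) = c$, and the criticality hypothesis on $(i,G)$ at $\theta$ then yields $f_i(G(\vec f)) \in \N$.

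Next, flatness of $\N_\bot$ splits the comparison $\pi_k(G)(\vec f) \sqsubseteq G(\vec f)$ into two alternatives: either they coincide (and then $f_i(\pi_k(G)(\vec f)) = f_i(G(\vec f)) \in \N$ is immediate), or $\pi_k(G)(\vec f) = \bot$. In the $\bot$-case, if $G(\vec f) = \bot$ too, then the criticality conclusion is already $f_i(\bot) \in \N$, which is what is needed. The remaining sub-case, $\pi_k(G)(\vec f) = \bot$ while $G(\vec f) = m \in \N$, is what I expect to be the main obstacle: here I still have to show that $f_i$ cannot be strict.

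I plan to close this last sub-case by contradiction. Assume $f_i(\bot) = \bot$, and build $\vec g \sqsubseteq \vec f$ by leaving $f_j$ unchanged for $j \neq i$ and setting $g_i(x) = f_i(x)$ for $x \leq k$ while $g_i(x) = \bot$ for $x > k$; strictness of $f_i$ keeps $g_i$ monotone. Because $g_i$ and $f_i$ agree on all inputs $\leq k$, Lemma~\ref{lemma3.8} gives $\pi_k(\vec g) = \pi_k(\vec f)$, so $\pi_k(F)(\vec g) = c$ and $\vec g$ still matches $\pi_k(\theta)$; invoking the lifting step once more, $\vec g$ matches $\theta$ and $F(\vec g) = c$. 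Criticality of $(i,G)$ then demands $g_i(G(\vec g)) \in \N$, but monotonicity of $G$ confines $G(\vec g)$ to $\{\bot, m\}$, and the construction of $g_i$ together with strictness of $f_i$ sends both candidates to $\bot$, contradicting criticality. Hence $f_i(\bot) \in \N$, which finishes the proof.
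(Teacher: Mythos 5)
There is a genuine gap, and it sits exactly where you predicted the ``main obstacle'' would be. In the final sub-case you assume that $\pi_k(G)(\vec f) = \bot$ together with $G(\vec f) = m \in \N$ forces $g_i(m) = \bot$, i.e.\ you implicitly assume $m > k$, as if the projection $\pi_k$ only truncated the \emph{output} of $G$. But $\pi_k(G)(\vec f) = \pi_k\bigl(G(\pi_k(\vec f))\bigr)$: the projection also truncates $G$'s \emph{arguments}, so $\pi_k(G)(\vec f)$ can be $\bot$ because $G(\pi_k(\vec f)) = \bot$ while $G(\vec f) = m \leq k$. A typical instance is $G(f) = 0$ iff $f(k+1) = 0$, with $f$ taking the value $0$ at $k+1$: here $m = 0 \leq k$. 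In that situation your $\vec g$ does not help: it keeps $f_j$ at full strength for $j \neq i$ and does not truncate $g_i$'s output values, so $\pi_k(\vec f) \sqsubseteq \vec g$ properly and $G(\vec g)$ may still equal $m$; then $g_i(G(\vec g)) = g_i(m) = f_i(m)$ may well lie in $\N$, and no contradiction with criticality is obtained. So the claim that ``both candidates'' are sent to $\bot$ fails, and the proof does not close.

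The underlying issue is that you are trying to prove a stronger statement than the paper needs: you quantify over all $\vec f \in \I^n$, whereas the paper's proof (and its only use, in the proof of the Key Lemma, where the lemma is applied to $\vec g \in \I^n_k$) restricts criticality of $(i,\pi_k(G))$ to arguments $\vec f \in \I^n_k$. With that restriction your entire case analysis collapses: your lifting step (which is exactly the paper's first step, and is correct) gives that $\vec f$ matches $\theta$, that $F(\vec f)\!\downarrow$, and hence $f_i(G(\vec f)) \in \N$ by criticality; then Lemma~\ref{lemma3.8} applied at level 1 gives $f_i(x) = f_i(\pi_k(x))$ for $f_i \in \I_k$, so $f_i(\pi_k(G)(\vec f)) = f_i\bigl(\pi_k(G(\vec f))\bigr) = f_i(G(\vec f)) \in \N$ in one line --- a strict $f_i \in \I_k$ cannot distinguish an argument $> k$ from $\bot$, which is precisely the fact your construction of $\vec g$ was trying to simulate by hand. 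I recommend you restate the goal with $\vec f \in \I^n_k$ and replace the contradiction argument by this direct appeal to Lemma~\ref{lemma3.8}; whether your stronger all-of-$\I^n$ version is even true is unclear and, in any case, unnecessary for the Key Lemma.
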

\proof
Let $\vec f \in \I^n_k$ be such that $\pi_k(\theta)$ matches $\vec f$ and such that $\pi_k(F)(\vec f)\!\!\downarrow$.
Then $\theta$ matches $\vec f$ and $F(\vec f)\!\!\downarrow$, so $f_i(G(\vec f))\!\!\downarrow$. By Lemma \ref{lemma3.8} then, $f_i(\pi_k(G)(\vec f))\!\!\downarrow$. \qed

\subsection{Proof of the Key Lemma}
Recall that we will prove this lemma by reversed induction on the cardinality of $S$.
We will now let the sequence $\{F_k\}_{k \in \N}$ be given, satisfying the assumption of Lemma \ref{lemma3.2}, and we let $\vec f \in \I^n$ be given. We have that $F_S = \bigsqcup_{k \in \N} (F_k)_S$, and we will devise a sequential procedure for evaluating $F_S(\vec f)$. To this end, we will construct a path  matching $\vec f$ step by step until we  have enough information about $\vec f$ to know the value of $F_S(\vec f)$. We will prove that if $F_S(\vec f)\!\!\downarrow$, then our procedure applied to $\vec f$ will come to a halt, and that whenever it halts, the conclusion will be the right one for the given input. \smallskip

For each $k \in \N$, let $P_k$ be in $S$-normal form such that $(F_k)_S = \llbracket P_k\rrbracket{}$.
We will describe one step in the construction:

Let $\theta = (i_1,G_1,b_1), \ldots , (i_m,G_m,b_m)$ be the path matching $\vec f$ constructed after step $m$. 
\begin{enumerate}
\item If there is some $k$ such that $P_k(\pi_k(\theta))\!\!\downarrow$ we know that $F_S(\vec f) = P_k(\pi_k(\theta))$ and we output this value. 
\item If there is no $\vec g$ such that $\vec g$ matches $\theta$ and $F_S(\vec g)\!\!\downarrow$, we know that $F_S(\vec f) = \bot$, and our procedure applied to $\vec f$ comes to a halt with output $\bot$.
\item Otherwise, let $k$ be minimal such that for some $\vec g \in \I^n_k$ matching $\theta$ we have that $(F_k)_S(\vec g)$ terminates. Select one such $\vec g$. Notice that we can split between the cases 1., 2. and 3. on the basis of $\theta$, and that the values of $k$ and $\vec g$ in this case only depend on $\theta$, so we do not need to know $\vec f$ in order to know what to do.
\end{enumerate}
In case of alternative (3), we will see how to extend $\theta$ to a longer path matching $\vec f$:

Let $l \geq k$. Then $P_l(\vec g)\!\!\downarrow$ while $P_l(\pi_l(\theta))= \bot$. By Lemma \ref{lemma3.6}  there will be an $S$-query $(i_l,H_l)$ that is critical for $(F_l)_S$ at $\pi_l(\theta)$ such that there is no $j \leq m$ with $i_l = j$ and $\pi_l(G_j) \sqsubseteq H_l$. Then, by Lemma \ref{lemma3.11}, $(i_l,\pi_k(H_l))$ will be critical for $\pi_k(F_k)$ at $\pi_k(\theta)$. Since $g_{i_l}(H_l(\vec g))\!\!\downarrow$ and $\vec g \in \I^n_k$, we have that $g_{i_l}(\pi_k(H_l)(\vec g))\!\!\downarrow$. Since $\pi_k(F_l) = F_k$ we also have that $\pi_k(P_l)$ is a procedure for $F_k$, and the evaluation of $\pi_k(P_l)(\vec g)$ will match the evaluation of $\pi_k(P_l)(\pi_k(\theta))$ for as long as the latter evaluation goes. It follows that $(i_l,\pi_k(H_l))$ will be the query unanswered by $\pi_k(\theta)$ that we get when proving Lemma \ref{lemma3.6} for $\pi_k(P_l)$, $\pi_k(\theta)$ and $\vec g$.  These considerations contain a proof of the following claim:
\smallskip
\begin{description}
\item[Claim]
 \emph{ For each $l \geq k$, let $X_l$ be the set consisting of pairs $(i,H)$ that are critical $S$-queries for $(F_l)_S$ at $\pi_l(\theta)$ and such that for no $j$ we have that $i = j$ and $\pi_k(G_j) \sqsubseteq \pi_k(H)$. Then $X_l$ is nonempty.}
 \end{description}

\noindent If $k \leq l_1 \leq l_2$, then $\pi_{l_1}$ will map $X_{l_2}$ into $X_{l_1}$, and each $X_l$ is finite.  We use K\"onig's lemma  to see  that there will be one $i \leq n$ and one increasing sequence $\{H_l\}_{l \geq k}$ of $S \cup \{i\}$-functionals, with each  $(i,H_l) \in X_l$. We select one such sequence (depending on $\theta$, but not on $\vec f$).

If $i \in S$ we actually have that each $H_l$'s is a   constant, and that they are all the same. In this case, we let $H$ be this constant. If $i \not \in S$, we use the induction hypothesis, and let $H = \bigsqcup_{l \geq k}H_k$ be sequential.

If $f_i(H(\vec f)) = \bot$ our procedure for $F_S(\vec f)$ will not terminate, which is as it should be since $(i,H)$ is critical at $\theta$, so $F_S(\vec f)$ will not terminate either.  If $f_i(H(\vec f)) = b \in \N$, we add $(i,H,b)$ to our path and continue the process. This is the only place where we actually use $\vec f$, and clearly, the new path will also match $\vec f$.
\smallskip

This ends one full step in the construction of the path matching $\vec f$.
Clearly, if this process on $\vec f$ leads to a conclusion after finitely many steps, the output in $\N_\bot$ will coincide with $F_S(\vec f)$. It remains to prove that if $F_S(\vec f)\!\!\downarrow$, then the process will terminate.

Let $F_S(\vec f)\!\!\downarrow$ and let $k_0$ be minimal such that $F_S(\pi_{k_0}(\vec f))\!\!\downarrow$. Without loss of generality we may assume that $\vec f \in \I^n_{k_0}$, and then $(F_{k_0})_S(\vec f)\!\!\downarrow$.

When we apply our strategy to the input $\vec f$, we will build up a path matching $\vec f$, and as long as the $S$-normal procedure for $(F_{k_0})_S$ does not terminate on the path built up, the $k$ of the construction will be bounded by $k_0$. But as long as this is the case, we are properly extending the path even when projected to the $k_0$-level. This level is finite, so this cannot go on forever. The conclusion is that our process, when applied to $\vec f$, must terminate.

What remains is to summarise how we construct a sequential procedure from the $S$-paths.
The tree of paths depends only on $S$, $F$, the sequence of $F_k$'s and the choice of $\vec g$ in each step, and which branch, or path, we follow in that tree depends deterministically on $\vec f$. Each item $(i,G,b)$ in one of the $S$-paths constructed will correspond to an instruction
\begin{center}case $x_i(G)$ of $b \Rightarrow \cdot$\end{center}
in the sequential procedure for $F_S$ that we actually construct. It is easy to
see that this top-down translation of the tree of $S$-paths to a sequential
procedure will work. $\hfill\qEd$
% This ends the proof of the Key Lemma.

%
\section{Conclusions and discussion}

In this note we have not just completed the characterisation of when the sequential functionals of a type $\sigma$ form a \emph{dcpo}. As a bonus, we have proved that the sequential functionals of the types $(\iota \rightarrow \iota)^n \rightarrow \iota$ are what is called ``left bounded" in \cite{L-N}, and with a uniform bound on the left nesting needed. One key aspect of the examples in \cite{DN1,N-S} is that the sequence without a sequential upper bound in both cases will have increasing left depth, and  in the limit there has to be an evaluation with an infinite left nesting of sub-procedure calls. It is then natural to ask if the substructure of the sequential functionals of any type with a fixed bound on the left nesting will form a \emph{dcpo}. This seems likely, but since these classes in general do not seem to possess other interesting closure properties, a result along these lines will be of limited interest.

In Section 7 of \cite{N-S} some further open problems were listed, and we rephrase them here. It is still not known exactly when a finite sequential functional is the least upper bound of some non-trivial increasing sequence of sequential functionals. We know that a finite set of finite sequential functionals with a sequential upper bound will have a least sequential upper bound, but this is unknown for finite classes of sequential functionals in general. We may also ask if a stronger property may hold, will every finite bounded set of sequential functionals have a sequential least upper bound?  These problems may be worthwhile to look into, not because the results will be of particular interest in themselves, but because solving the problems may lead to a better understanding of  the sequential functionals.
\section*{Acknowledgement}I am grateful to two anonymous referees for pointing out typos and language errors, and for helpful suggestions concerning the exposition.

\end{document}